%
\documentclass{llncs}


%
\usepackage{makeidx}  

\usepackage{mathrsfs}
\usepackage{enumitem}
\usepackage{stmaryrd}
\usepackage{makeidx}  
\usepackage{booktabs}
\usepackage{graphicx}

\usepackage{csquotes}
\usepackage{mathtools}
\usepackage{amssymb}
\usepackage{amsmath}

\usepackage{hyperref}

\usepackage{tikz}
\usetikzlibrary{arrows,shapes.multipart,backgrounds,automata,positioning}
\usepackage{subfig} 
\SetLabelAlign{CenterWithParen}{\hfil(\makebox[1.0em]{#1})\hfil}

\usepackage{ifthen}
\newcommand{\forLoop}[5][1]{
  \setcounter{#4}{#2}
  \ifthenelse{\value{#4}>#3}{}
  { 
    #5
    \addtocounter{#4}{#1}
    \forLoop[#1]{\value{#4}}{#3}{#4}{#5}
  }
}

\usepackage{myalg}

\title{Model Checking Gene Regulatory Networks
\thanks{This research was supported by the European Research Council (ERC) under grant 267989 (QUAREM),
the Austrian Science Fund (FWF) under grants S11402-N23 (RiSE) and Z211-N23 (Wittgenstein Award),
the European Union's 
SAGE grant agreement no. 618091, ERC Advanced Grant ERC-2009-AdG-250152, the People Programme (Marie Curie Actions) of the European Union's Seventh Framework Programme (FP7/2007-2013) under REA grant agreement no. 291734, and the SNSF Early Postdoc.Mobility Fellowship, the grant number P2EZP2\_148797.
}
}
\titlerunning{Model checking Gene Regulatory Networks}  
%
\author{Mirco Giacobbe${}^{*}$ \and C\u{a}lin C. Guet${}^{*}$ \and Ashutosh Gupta${}^{*\dagger}$ \and \\
  Thomas A. Henzinger${}^{*}$ \and Tiago Paix\~{a}o${}^{*}$ \and Tatjana Petrov${}^{*}$}
\authorrunning{Giacobbe et al.} 
%
\tocauthor{Mirco Giacobbe, C\u{a}lin C. Guet, Ashutosh Gupta, Thomas A. Henzinger, Tiago Paix\~{a}o, Tatjana Petrov}
\institute{IST Austria, Austria${}^{*}$ \quad TIFR, India${}^{\dagger}$\\
}

\begin{document}

\maketitle              
\pagenumbering{arabic}

\newcommand{\ashu}[1]{ {\textcolor{magenta} {Ashu: #1}} }
\newcommand{\mirco}[1]{ {\textcolor{red} {Mirco: #1}} }
\newcommand{\tanja}[1]{ {\textcolor{red} {Tanja: #1}} }
\newcommand{\tiago}[1]{ {\textcolor{blue} {Tiago: #1}} }

\def\GRNlandscape{GRN space}
\def\POPGRN{GRN-population}
\def\DETGRN{GRN-individual}

\def\run{r}

\def\tran{{\cal T}}
\def\inputG{in}
\def\detgrn{{\mathsf{detG}}}
\def\popgrn{{\mathsf{popG}}}

\def\topp{T}
\def\cons{{cons}}

\def\rob{\rho}
\def\K{M}
\def\Pmat{\mathbf{P}}
\def\Qmat{\mathbf{T}}
\def\Q{T}

\def\L{{\cal L}} 

\def\Kvec{\mathbf{M}}

\def\Wvec{\mathbf{W}}
\def\Wrandom{W}

\def\Lvec{\mathbf{L}}
\def\kvec{\mathbf{k}}
\def\zero{\mathbf{0}}
\def\one{\mathbf{1}}

\def\ivec{\mathbf{i}}
\def\jvec{\mathbf{j}}
\def\lvec{\mathbf{l}}
\def\Kvecbias{\mathbf{\tilde{K}}}

\tikzstyle{gen}=[shape=circle,draw=black!50, fill=black!20, thick, inner sep=0pt, minimum size=10mm]
\tikzstyle{act}=[->,thick]
\tikzstyle{rep}=[-|,thick]
\tikzstyle{seqstate}=[shape=circle,draw=black!50, fill=white, thick, inner sep=0pt, minimum size=7mm]
\newcommand{\infseq}[5]{
\begin{tikzpicture}
\node[seqstate] (s1) {#1}; 
\node[seqstate] (s2) [right of=s1,xshift=1cm] {#2} 
  edge [<-,thick] (s1) {}; 
\node[seqstate] (s3) [right of=s2,draw=none,xshift=1cm] {~\dots} 
  edge [<-,dashed] (s2) {}; 
\node[seqstate] (s4) [right of=s3,xshift=1cm] {#3} 
  edge [<-,dashed] (s3) {}; 
\node[seqstate] (s5) [right of=s4,xshift=1cm] {#4} 
  edge [<-,thick] (s4) {}; 
\node[seqstate,draw=none] (s6) [right of=s5,xshift=1cm] {~\dots ~#5} 
  edge [<-,dashed] (s5) {}; 
\end{tikzpicture}
}

\def\len{l}

\newcommand{\corref}[1]{Cor.~\ref{#1}}
\newcommand{\figref}[1]{Fig.~\ref{#1}}
\newcommand{\tableref}[1]{Table~\ref{#1}}
\newcommand{\secref}[1]{Section~\ref{#1}}
\newcommand{\thmref}[1]{Thm.~\ref{#1}}
\newcommand{\dfnref}[1]{Dfn.~\ref{#1}}
\newcommand{\lemref}[1]{Lem.~\ref{#1}}
\newcommand{\exref}[1]{Ex.~\ref{#1}}



\newcommand{\union}{\cup}

\newcommand{\trProb}{{T}}
\newcommand{\ade}{\mathtt{A}}
\newcommand{\tim}{\mathtt{T}}
\newcommand{\cyt}{\mathtt{C}}
\newcommand{\gua}{\mathtt{G}}
\newcommand{\avec}{\mathbf{a}}
\newcommand{\Mut}{\mathtt{Mut}}

\newcommand{\weightSet}{{W}}
\newcommand{\wmax}{w^{max}}
\newcommand{\Lan}{{\cal L}}
\newcommand{\pr}{p}
\newcommand{\nuc}{a}
\newcommand{\Nuc}{A}
\newcommand{\B}{\mathbb{B}}
\newcommand{\R}{\mathbb{R}}
\newcommand{\N}{\mathbb{N}}

\newcommand{\weightA}{\weight_A}
\newcommand{\weightB}{\weight_B}
\newcommand{\powerset}{\mathcal{P}}

\newcommand{\wVec}{\textbf{\weight}}
\newcommand{\kVec}{\textbf{k}}
\newcommand{\threshA}{t_A}
\newcommand{\threshB}{t_B}
\newcommand{\inA}{i_A}
\newcommand{\inB}{i_B}

\newcommand{\piVec}{\boldsymbol{\pi}}
\newcommand{\prob}{p}
\newcommand{\probVec}{\mathbf{\prob}}
\newcommand{\lengthVec}{\mathbf{\length}}

\newcommand{\grn}{\mathcal{G}}
\newcommand{\grnP}{\mathcal{Z}}
\newcommand{\grnset}{\Gamma}
\newcommand{\gens}{G}
\newcommand{\acts}{\tikz[baseline=-0.5ex] \draw[->] (0ex,0ex) -- (2ex,0ex);}
\newcommand{\reps}{\tikz[baseline=-0.5ex] \draw[-|] (0ex,0ex) -- (2ex,0ex);}
\newcommand{\wagner}{\mathcal{W}}
\newcommand{\weight}{w}
\newcommand{\thresh}{t}
\newcommand{\langstat}{\mathcal{L}}
\newcommand{\init}{I_0}
\newcommand{\invar}{I_*}
\newcommand{\gen}{g}
\newcommand{\genvar}[1]{#1}
\newcommand{\weightvar}[1]{\weight({#1})}
\newcommand{\allVars}{V}
\newcommand{\wVar}{v}
\newcommand{\ltlnext}{\mathcal{X}}
\newcommand{\ltlglobally}{\Box}
\newcommand{\ltlfinally}{\Diamond}
\newcommand{\ltluntil}{\mathcal{U}}
\newcommand{\ctlexists}{\textbf{E}}
\newcommand{\prop}{\varphi}
\newcommand{\length}{l}
\newcommand{\mutable}{\mathcal{M}}
\newcommand{\state}{\mathbf{\sigma}}
\newcommand{\states}{\Sigma}
\newcommand{\stateVar}{\mathbf{v}}
\newcommand{\trans}{\tau}

\newcommand{\mutationVec}{\mathbf{k}}

\newcommand{\bound}{n}
\newcommand{\enc}[1]{[[#1]]}
\newcommand{\transSys}{\emph{Tr}}
\newcommand{\transVar}{\Phi}
\newcommand{\transenc}[3]{\transVar(#1,#2,#3)}
\newcommand{\genenc}{s}
\newcommand{\pregenenc}{\emph{Pre}}
\newcommand{\postgenenc}{\emph{Post}}
\newcommand{\ite}{\emph{ite}}
\newcommand{\witness}{\neg{\prop}}
\newcommand{\gene}{g}
\newcommand{\geneN}{d}

\newcommand{\diag}[1]{\emph{diag}(#1)}

\def\Robustness{\mbox{Robustness}}
\def\Property{\phi}
\def\Perturbation{\mbox{Perturbations}}
\def\mutation{\mbox{mut}}

\def\outcomes{\Omega}
\def\sigmalgebra{\mathcal{F}}
\def\probmeasure{P}

\def\flipprob{{p_f}}

\newtheorem{defn}{Definition}

\newcommand{\Def}[1]{Def. \ref{#1}}
\newcommand{\Fig}[1]{Fig. \ref{#1}}
\newcommand{\Eq}[1]{Eq. \ref{#1}}
\newcommand{\Ex}[1]{Ex. \ref{#1}}

\def\wagners{Wagner's}

\def\wI{0.75}
\def\wA{1}
\def\wB{1}
\def\tA{0.5}
\def\tB{0.5}
\def\geneA{a}
\def\geneB{b}

\def\grnU{\grn}
\def\grnR{\grn^r}

\def\fo{\trans}
\def\ft{\trans^r}

\def\tc{.75}
\def\tb{.25}
\def\wa{0.5}
\def\wb{.5}
\def\la{5}
\def\lb{5}
\def\fractc{\frac{3}{4}}
\def\fractb{\frac{1}{4}}
\def\fracwa{\frac{1}{2}}
\def\fracwb{\frac{1}{2}}
\def\fracmutwb{\frac{2}{10}}

\def\polyhedra{polyhedra}
\def\linear{linear}
\def\machine{\mirco{machine here}}
\def\mathsat{\textsc{MathSAT5}}

\newcommand{\algGenCons}{\textsc{GenCons}\xspace}
\newcommand{\algGenConsRec}{\textsc{GenConsRec}\xspace}



\begin{abstract}
The behaviour of gene regulatory networks (GRNs) is typically
analysed using simulation-based statistical testing-like methods.
In this paper, we demonstrate that we can replace this
approach by a formal verification-like method that gives higher
assurance and scalability. 
We focus on Wagner's weighted GRN model with varying weights, 
which is used in evolutionary biology. 
In the model, weight parameters represent the gene interaction strength
that may change due to genetic mutations.
For a property of interest, we synthesise the constraints
over the parameter space that represent the set of GRNs satisfying 
the property.
We experimentally show that our parameter synthesis
procedure computes the mutational robustness of GRNs --an important
problem of interest in evolutionary biology-- more efficiently 
than the classical simulation method.
We specify the property in linear temporal logics.
We employ symbolic bounded model checking and SMT solving to compute
the space of GRNs that satisfy the property, which amounts to
synthesizing a set of linear constraints on the weights.



\end{abstract}

\section{Introduction}
\label{sec:intro}
%
Gene regulatory networks (GRNs) are one of the most prevalent and fundamental
type of biological networks whose main actors are genes regulating
other genes.
A topology of a GRN is represented by a graph of interactions among a
finite set of genes, where nodes represent genes, and edges denote the
type of regulation (activation or repression) between the genes, if any.
%
%
In~\cite{wagner_does_1996}, Wagner introduced a simple but useful
model for GRNs that captures important features of GRNs.
In the model, a system state specifies the activity of each gene as a
Boolean value.
The system is executed in discrete time steps, and all gene values are
synchronously and deterministically updated: a gene active at time $n$
affects the value of its neighbouring genes at time $n+1$.  
This effect is modelled through two kinds of parameters: {\it
threshold} parameters assigned to each gene, which specify the
strength necessary to sustain the gene's activity, and {\it weight}
parameters assigned to pairs of genes, which denote the strength of
their directed effect.

%
Some properties of GRNs can be expressed in linear temporal
logic (LTL)(such as reaching a steady-state), where atomic
propositions are modelled by gene values.
A single GRN may or may not satisfy a property of interest.
Biologists are often interested in the behavior of \emph{populations
of GRNs}, and in presence of environmental perturbations.
For example, the parameters of GRNs from a population may change from
one generation to another due to mutations, and the distribution over
the different GRNs in a population changes accordingly.
We refer to the set of GRNs obtained by varying parameters on
a fixed topology as {\em \GRNlandscape}.
For a given population of GRNs instantiated from a \GRNlandscape, 
typical quantities of interest refer to the long-run average
behavior.
For example, \emph{robustness} refers to the averaged satisfiability of
the property within a population of GRNs, after an extended number of
generations. 
%
%
In this context, Wagner's model of GRN has been used to show
that mutational robustness can gradually evolve in
GRNs~\cite{Wagner2006_robustness}, that sexual reproduction can
enhance robustness to recombination~\cite{Burch2006_robusteness},
or to predict the phenotypic effect of mutations~\cite{mccarthy2003}.
%
The computational analysis used in these studies relies on explicitly
executing GRNs, with the purpose of checking if they satisfy the
property.
Then, in order to compute the robustness of a population of GRNs,
the satisfaction check must be performed repeatedly for many
different GRNs.
In other words, robustness is estimated by statistically sampling GRNs from
the \GRNlandscape\;and executing each of them until the property is (dis)proven.
In this work, we pursue formal analysis of Wagner's GRNs which allows
to avoid repeated executions of GRNs,  
and to compute mutational robustness with higher assurance and scalability.

%
In this paper, we present a novel method for synthesizing the space of parameters
which characterize GRNs that satisfy a given property.
These constraints eliminate the need of explicitly executing the GRN
to check the satisfaction of the property.
%
Importantly, the synthesized parameter constraints allow to efficiently answer questions 
that are very difficult or impossible to answer by simulation, 
e.g. emptiness check or parameter sensitivity analysis.
%
%
In this work, we chose to demonstrate how the synthesized constraints can be used to 
compute the robustness of a population of GRNs with respect to genetic mutations.
%
Since constraint evaluation is usually faster than executing a GRN, the
constraints pre-computation enables faster computation of robustness.
This further allows to compute the robustness with higher precision, within the same computational time.
Moreover, it sometimes becomes possible to replace the
statistical sampling with the exact computation of robustness. 
%



%

In our method, for a given \GRNlandscape\;and LTL property, we used
SMT solving and bounded model checking to generate a set of
constraints such that a GRN satisfies the LTL property if and only if
its weight parameters satisfy the constraints.
The key insight in this method is that the obtained constraints are
complex Boolean combinations of linear inequalities.
Solving linear constraints has been the focus of both industry and
academia for some time.
However, the technology for solving linear constraints with Boolean
structure, namely SMT solving, has matured only in the last
decade~\cite{barrett20136}.
This technology has enabled us to successfully apply an SMT solver to
generate the desired constraints.

We have {built} a tool which computes the constraints for a given
\GRNlandscape\; and a property expressed in a fragment of LTL.
In order to demonstrate the effectiveness of our method, we computed
the robustness of five GRNs listed in~\cite{cardellimorphisms}, and
for three GRNs known to exhibit oscillatory behavior.
We first synthesized the constraints and then we used them to estimate
robustness based on statistical sampling of GRNs from the GRN space.
Then, in order to compare the performance with the simulation-based
methods, we implemented the approximate computation of robustness,
where the satisfiability of the property is verified by executing the
GRNs explicitly.
The results show that in six out of eight tested networks, the
pre-computation of constraints provides greater efficiency, performing
up to three times faster than the simulation method.

\noindent {\textbf{Related Work}} 
Formal verification techniques are already used for aiding various
aspects of biological
research~\cite{danos2004formal,fisher2007executable,kwiatkowska2008using,jha2009bayesian,yordanov2013smt}. In particular, the robustness of models of biochemical systems with
respect to temporal properties has been studied
\cite{mateescu2011ctrl,batt2007model,Rizk2009_robustness}. 
%
Our work is, to the best of our knowledge, the first application of
formal verification to studying the evolution of mutational robustness
of gene regulatory networks and, as such, it opens up a novel
application area for the formal verification community.
As previously discussed, with respect to related studies in
evolutionary biology, our method can offer a higher degree of
assurance, more accuracy, and better scalability than the traditional,
simulation-based approaches.
In addition, while the mutational robustness has been studied only for invariant 
properties, our method allows to compute the mutational robustness 
for non-trivial temporal properties that are
expressible in LTL, such as bistability or oscillations between gene
states.
\subsection{Motivating example}
\label{sec:example}
\input{./fig/fig-bi-stable_exp}
In the following, we will illustrate the main idea of the paper on an
example of a \GRNlandscape\;$\topp$ generated from the GRN network
shown in \figref{fig:ab}(a).
Two genes $A$ and $B$ inhibit each other, and both genes have a
self-activating loop.
The parameters $(\inA,\inB)$ represent constant inputs, which we
assume to be regulated by some other genes that are not presented in
the figure.
Each of the genes is assigned a threshold value ($\threshA$,
$\threshB$), and each edge is assigned a weight
($\weight_{AA}$,$\weight_{AB}$, $\weight_{BA}$, $\weight_{BB}$).
The dynamics of a \DETGRN\;chosen from $\topp$ depends on these
parameters.
Genes are in either  active or inactive state, which we represent
with Boolean variables. 
For a given initial state of all genes, and for fixed values of
weights and thresholds, the values of all genes evolve
deterministically in discrete time-steps that are synchronized over
all genes.
%
Let $\geneA$ (resp. $\geneB$) be the Boolean variable representing the
activity of gene $A$ (resp. $B$). 
We denote a GRN state by a pair $(\geneA,\geneB)$.
Let $\fo$ be the function that governs the dynamics of
$\grnU$ (see~Def.~\ref{def:grn-sem}):
\begin{align*} 
\fo(\geneA,\geneB) &= (\inA +\geneA \weight_{AA}-\geneB \weight_{BA} > \threshA,\; \inB+\geneB\weight_{BB}-\geneA \weight_{AB} > \threshB ) 
\end{align*}
The next state of a gene is the result of arithmetically adding  
the influence from other active genes.

The topology of mutually inhibiting pair of genes is known to be
bistable: whenever one gene is highly expressed and the other is
barely expressed, the system remains stable
\cite{gardner2000construction,Rizk2009_robustness}.
%
%
The bistability property can be written as the following LTL formula (see
Def.~\ref{def:ltl}):
$$
( A \land \neg B \implies \ltlglobally (A \land \neg B )  ) \land
( \neg A \land B \implies \ltlglobally (\neg A \land B )  ).
$$
Let us fix values for parameters $t_A=t_B=0.6$,
$\weight_{AB}=\weight_{BA}=\weight_{BB}=0.3$,
and $\inA=\inB = \frac{2}{3}$.
%
%
Then, we can check that a GRN is bistable by executing the GRN.
Indeed, for the given parameter values, the states $(0,1)$ and $(1,0)$
are fixed points of $\fo$.
In other words, the GRN with those parameter values have two stable
states: if they start in state $(0,1)$ (resp. $(1,0)$), they remain
there.
Now let us choose $\inA=\frac{2}{3}$, $\inB=\frac{1}{3}$.
Again, by executing the GRN, we can conclude that it does not satisfy the property:
 at state $(0,1)$, $B$ does not have sufficiently strong
activation to surpass its threshold and the system jumps to $(0,0)$.
%
%
Intuitively, since the external activation of $B$ is too small,
the phenotype has changed to a single stable state.
In general, it is not hard to inspect that the bistability property will be met by any choice of parameters satisfying the following constraints:
\begin{align} 
\{\inA-\weight_{BA}\leq \threshA,
\;\inA+\weight_{AA}>\threshA,\;\inB-\weight_{AB}\leq \threshB,
\;\inB+\weight_{BB}>\threshB\}.
\label{eq:cons}
\end{align}

Let's now suppose that we want to compute the robustness of $\topp$ in
presence of variations on edges due to mutations.
Then, for each possible value of parameters, one needs to verify if the respective
GRN satisfies the property. 
%
%
Using the constraints~\eqref{eq:cons}, one may verify GRNs without
executing them.
%

%

%
%

Our method automatizes this idea to any given GRN topology and any property specified in LTL.
We first encode $\topp$ as a parametrized labelled
transition system, partly shown in ~\figref{fig:ab}(b).
%
Our implementation does not explicitly construct this transition system, nor executes the GRNs (the implementation is described in~\secref{sec:implementation}).
%
Then, we apply symbolic model checking to compute the constraints which
represent the space of GRN's from $\topp$ that satisfy the bi-stability property.

To illustrate the scalability of our method in comparison with the standard methods, in~\figref{fig:ab}(c), we compare the performance of computing the mutational robustness with and without precomputing the constraints (referred to as \emph{evaluation} and \emph{execution} method respectively).
%
We choose a mutation model such that each
parameter takes 13 possible values distributed according to the binomial distribution (see
Appendix for more details on the mutation model).
We estimate the robustness value by statistical sampling of the possible parameter values.
For a small number of samples, our method is slower because we spend
extra time in computing the constraints.
However, more samples may be necessary for achieving the desired precision. 
As the number of samples increases, our method becomes faster, because each evaluation of the constraints is two times faster than checking bistability by executing \DETGRN s.
For $1.2\times 10^5$ many simulations, execution and evaluation
methods take same total time, and the robustness value estimated from these many
samples lies 
in the interval $(0.8871,0.8907)$ with $95\%$ confidence.
Hence, for this GRN, if one needs better precision for the robustness
value, our method is preferred.

One may think that for this example, we may compute exact robustness because 
the number of parameter values is only $13^6$ (four weights and two inputs).
For simplicity of illustration, we chose this example, and we later present
examples with a much larger space of parameters, for which 
exact computation of robustness is infeasible.

\section{Preliminaries}
\label{sec:prelim}
In this section, we start by defining a {\em \GRNlandscape}, 
which will serve to specify common features for GRNs from the same population.
These common features are types of gene interactions (topology), constant parameters (thresholds), and ranges of parameter values that are subject to some environmental perturbation (weights).
%
%
Then, we formally introduce a model of an individual GRN from the
\GRNlandscape\;and temporal logic to express its properties.
%

\subsection{Basic notation}

$\mathbb{R}_{\geq 0}$ (resp. $\mathbb{Q}_{\geq 0}$) is the set of
non-negative real (resp. rational) numbers.
For $m < n$, let $m..n$ denote the set of integers from $m$ to $n$.
%
%
%
With abuse of notation, we treat finite maps with ordered domain
as vectors with size of the map.

Let $\allVars$ be an infinite set of variable names.
For rationals $k_1,\dots,k_n$, a rational variable vector $\wVar =
(v_1,\dots,v_n)$, and a rational $t$, let $ k_1v_1 + \dots + k_nv_n +
t$ denote a linear term.
Let $k_1v_1 + \dots + k_nv_n + t > 0 $ and $k_1v_1 + \dots + k_nv_n +
t \geq 0$ be a strict and non-strict inequality over $\wVar$
respectively.
Let $\linear(\wVar)$ be the set of all the (non-)strict inequalities
over $\wVar$.
Let $\polyhedra(\wVar)$ be the set of all the finite conjunctions of
the elements of $linear(\wVar)$.

\subsection{\GRNlandscape}

%
 
%
The key characteristics of the behaviour of a GRN are typically summarised
by a directed graph where nodes represent genes and edges denote the
type of regulation between the genes.
A regulation edge is either {\em activation} (one gene's activity
increases the activity of the other gene) or {\em repression} (one
gene's activity decreases the activity of the other
gene)~\cite{schlitt2007current}.
In Wagner's model of a GRN, in addition to the activation types
between genes, each gene is assigned a \emph{threshold} and each edge
(pair of genes) is assigned a \emph{weight}.  
The threshold of a gene models the amount of activation level necessary
to sustain activity of the gene. 
The weight on an edge quantifies the influence of the source gene  
on destination gene of the edge.

We extend the Wagner's model by allowing a range of values for weight parameters.
%
We call our model \GRNlandscape, 
%
denoting that all GRNs instantiated from that space share the same topology, and their parameters fall into given ranges.
%
%
%
%
%
%
We assume that each gene always has some minimum level of
expression without any external influence.
In the model, this constant input is incorporated by a special
gene which is always active, and activates all other genes from the
network.
The weight on the edge between the special gene and some other gene
represents the minimum level of activation.
The minimal activation is also subject to perturbation. 


\begin{defn} [\GRNlandscape] \rm \label{def:grn_topology}
A \GRNlandscape\;is a tuple 
$$
\topp = (\gens, \gene_{\inputG}, \acts, \reps, \thresh, \wmax, 
\weightSet),
$$ where
\begin{itemize}
\item $\gens=\{\gene_1,\ldots,\gene_{\geneN}\}$ is a finite ordered
set of genes,
\item $\gene_{\inputG} \in \gens$ is the special gene used to model the constant input
for all genes,
\item $\acts \subseteq \gens \times \gens$ is the activation relation
such that $\forall \gene \in G\setminus\{\gene_{\inputG}\} \;(\gene_{\inputG},\gene)\in\acts$ and $\forall g\; (\gene,\gene_{\inputG})\notin \acts$,
\item $\reps \subseteq \gens \times \gens$ is the repression relation
such that $ \reps \cap \acts = \emptyset 
\land \forall g \;(\gene,\gene_{\inputG})\notin \reps$,
\item $\thresh \colon \gens \to \mathbb{Q}$ is the threshold
function such that 
$\forall \gene \in \gens\setminus\{\gene_{\inputG}\} \;\thresh(\gene) \geq 0$
and $\thresh(\gene_{\inputG})  < 0 $,
\item $\wmax \colon (\acts\cup\reps)\to \mathbb{Q}_{\geq 0}$ is the
maximum value of an activation/repression,
\item $\weightSet = \powerset((\acts\cup\reps)\to \mathbb{Q}_{\geq
0})$
assigns a set of
possible weight functions to each activation/inhibition relation, so that $\weight\in\weightSet\Rightarrow \forall(\gene,\gene')\in\acts\cup\reps\;\weight(\gene,\gene')\leq\wmax(\gene,\gene')$.
\end{itemize}
\end{defn}

In the following text, if not explicitly specified otherwise, we will be
referring to the \GRNlandscape\;
$\topp=(\gens,\gene_{\inputG},\acts,\reps, \thresh, \wmax, 
\weightSet)$.

\subsection{\DETGRN}

\begin{defn}[\DETGRN] \rm \label{def:grn}
A \DETGRN\;$\grn$ is a pair $(\topp,\weight)$, where
$\weight\in\weightSet$ is a weight function from the \GRNlandscape.
%
%
\end{defn}

A state $\state \colon \gens \to \B$ of a
\DETGRN\;$\grn = (\topp,\weight)$ denotes the activation state of each
gene in terms of a Boolean value.
Let $\states(\grn)$ (resp. $\states(\topp)$) denote the set of all
states of $\grn$ (resp. $\topp$), such that
$\state(\gene_{\inputG})=true$.
The GRN model executes in discrete time steps by updating all the
activation states synchronously and deterministically according to the
following rule: a gene is active at next time if and only if the total
influence on that gene, from genes active at current time, surpasses
its threshold.

\begin{defn}[Semantics of a \DETGRN] \rm \label{def:grn-sem}
A {\em run} of a \DETGRN\; $\grn= (\topp,\weight)$ 
is an infinite sequence of states $\state_0, \state_1, \dots$ such
that $\state_n \in \states(\grn)$ and $\trans(\state_n) =
\state_{n+1}$ for all $n \geq0$, where $\trans \colon \states(\grn)
\to \states(\grn)$ is a deterministic transition function defined by
\begin{align}
\trans(\state):= \lambda\gen'\hspace{-2pt}. 
\hspace{-2pt}\left[
\qquad
\sum_{\hspace{-11ex}\mathrlap{\{\gen \mid \state(\gen) \land (\gen,\gen') \in \acts\}}} 
\weight(\gen) \quad~-\quad
\sum_{\hspace{-7ex}\mathrlap{\{ \gen \mid \state(\gen) \land (\gen,\gen') \in \reps\}}}
\weight(\gen)
\qquad
 > \thresh(\gen')\right]\text{.}
\label{eq:trans}    
\end{align} 
The language of $\grn$, denoted by $\llbracket\grn\rrbracket$, is
a set of all possible runs of $\grn$.
Note that a \DETGRN\;does not specify the initial state.
Therefore, $\llbracket\grn\rrbracket$ may contain more than one run.
\end{defn}

%

\subsection{Temporal properties}
A GRN exists in a living organism to exhibit certain behaviors.
Here we present a linear temporal logic (LTL) to express the expected
behaviors of GRNs. 

\begin{defn}[Syntax of Linear Temporal properties]\label{def:ltl} \rm
The grammar of the language of temporal linear logic formulae are
given by the following rules
\begin{align*}
\prop &::=~ ~\genvar{\gen} ~\mid~ 
(\lnot \prop) ~\mid~ (\prop \lor \prop) ~\mid~ 
(\prop \ltluntil \prop)  \text{,}
\end{align*}
where $\gen \in \gens$ is a gene.
\end{defn}

Linear temporal properties are evaluated over all (in)finite runs of
states from $\states(\gens)$.
Let us consider a run $r = \state_1,\state_2,\state_3,\dots \in
\states(\gens)^{*} \cup \states(\gens)^{\infty}$.
Let $r^i$ be the suffix of $r$ after $i$ states and $r_i$ is the $i$th
state of $r$.
The satisfaction relation $\models$ between a run and an LTL formula is
defined as follows:
\begin{align*}
  &r \models g \text{ if } r_1(g), \qquad\;\;\;\; 
 r \models \lnot\prop \text{ if } r \not\models \prop,
\qquad
r \models \prop_1\lor\prop_2 \text{ if } r \models \prop_1 
\text{ or } r \models \prop_2,
\\
&r \models (\prop_1 \ltluntil \prop_2) \text{ if }
\exists i. r^i \models \prop_2\text{ and }\forall j \leq i. r^j \models \prop_1.
\end{align*} 
Note that if $|r| < i$ then $r^i$ has no meaning.
In such a situation, the above semantics returns {\em undefined}, i.e.,
$r$ is too short to decide the LTL formula. 
We say a language $\Lan \models \prop$ if for each run $r \in \Lan$,
$r \models \prop$, and a GRN $\grn \models \prop$ if
$\Lan\llbracket\grn\rrbracket\models \prop$. 
Let $\ltlfinally \prop$ be shorthand of 
$true \ltluntil \prop$ and $\ltlglobally \prop$ be shorthand of 
$\lnot\ltlfinally \lnot\prop$.

Note that we did not include next operator in the definition of LTL.
This is because a typical GRN does not expect something 
is to be done in strictly next cycle. 
%



%

\section{Algorithm for parameter synthesis}
\label{sec:algo}
In this section we present an algorithm for synthesizing the weights'
space corresponding to a given property in linear temporal logic.
The method combines LTL model
checking~\cite{PrinciplesOfModelChecking} and satisfiability modulo
theory (SMT) solving~\cite{barrett2009satisfiability}.

The method operates in two steps.
First, we represent any \DETGRN\; from the \GRNlandscape\; with a
parametrized transition system.
In this system, a transition exists between every two states, and it
is labelled by linear constraints, that are necessary and sufficient
constraints to enable that transition in a concrete \DETGRN\; (for
example, see \figref{fig:ab}b)). 
We say that a run of the parametrized transition system is {\em feasible}
if the conjunction of all the constraints labelled along the run is
satisfiable.
Second, we search for all the feasible runs that satisfy the desired LTL
property and we record the constraints collected along them.
The disjunction of such run constraints fully characterizes the
regions of weights which ensure that LTL property holds in the
respective \DETGRN.


%

\begin{defn}[Parametrized transition system] \rm
\label{def:pts}
For a given \GRNlandscape\; $\topp$ 
and rational parameters map $\wVar : \gens \to \allVars$, the {\em parametrized transition system}
$(\topp, \wVar )$ is a labelled transition system
$(\states(\topp), \transVar)$, 
where the labeling of edges
$\transVar : \states(\topp) \times \states(\topp) \to \polyhedra(\wVar)$ is defined as follows:\\
\begin{align*}
\transVar:= 
 \lambda \state \state'.\hspace{-1ex}
    \bigwedge_{\gen' \in \gens} 
    \hspace{-1ex}\left[
\quad
\sum_{\hspace{-8ex}\mathrlap{\{\gen \mid \state(\gen) \land (\gen,\gen') \in \acts\}}} 
\wVar(\gen) \quad~-\quad
\sum_{\hspace{-7ex}\mathrlap{\{ \gen \mid \state(\gen) \land (\gen,\gen') \in \reps\}}}
\wVar(\gen)
\;\;
>\;\; \thresh(\gen')
\iff \state'(\gen')
    \right]   
 \text{.}
\end{align*}  
\end{defn}
$\transVar(\state,\state')$ says that a gene $g'$ is active in
$\state'$ iff the weighted sum of activation and suppression activity of the
regulators of $g'$ is above its threshold.

A {\em run} of $(\topp, \wVar )$ is a sequence of states
$\state_0, \state_1, \dots$ such that $\state_n \in \states(\topp)$ for
all $n \geq0$, and $\transVar(\state_0,\state_{1}) \land
\transVar(\state_1,\state_{2}) \land \dots$ is said to be the {\em
run constraint} of the run.
A run is feasible if its run constraint is satisfiable.
We denote by $\llbracket(\topp,\wVar)\rrbracket$ the set of
feasible traces for $(\topp,\wVar)$.
For a weight function ${\weight}$, let $\transVar(\state,
\state')[{\weight}/\wVar]$ denote the formula obtained by
substituting $\wVar$ by ${\weight}$ and let $(\topp,
\wVar )[{\weight}/\wVar] = (\states(\topp), \transVar')$, where 
$\transVar'(\state,\state') = \transVar(\state,\state')[{\weight}/\wVar]$
for each $\state,\state' \in \states(\topp)$.

In the following text, we refer to the parametrized transition system
$(\topp, \wVar)$ and an LTL property $\prop$.  
Moreover, we denote the run constraint of run $\run=\state_0,\state_1,\ldots\in \llbracket(\topp,\wVar)\rrbracket$ by $\cons(\run)$.  

\begin{lemma}\label{lem:feasruns}\rm
  For a weight function ${\weight}$,
  the set of feasible runs of $(\topp, \wVar )[{\weight}/\wVar]$
  is equal to 
  $\llbracket(\topp,{\weight})\rrbracket$.
\end{lemma}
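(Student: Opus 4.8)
The plan is to prove the equality of the two sets of runs by showing that a sequence of states is a run in one sense if and only if it is a run in the other, since both notions are sequences over the same state set $\states(\topp) = \states(\grn)$ (where $\grn = (\topp,\weight)$). The crux is a single-step claim: for any two states $\state,\state' \in \states(\topp)$, the substituted label $\transVar(\state,\state')[\weight/\wVar]$ holds (as a ground formula over the rationals) if and only if $\trans(\state) = \state'$, where $\trans$ is the deterministic transition function of Definition~\ref{def:grn-sem}. This is essentially immediate by unfolding: $\transVar(\state,\state')[\weight/\wVar]$ is the conjunction over $\gen'\in\gens$ of the biconditionals
\[
\Bigl(\;\textstyle\sum_{\{\gen\mid \state(\gen)\land(\gen,\gen')\in\acts\}}\weight(\gen)\;-\;\sum_{\{\gen\mid \state(\gen)\land(\gen,\gen')\in\reps\}}\weight(\gen)\;>\;\thresh(\gen')\;\Bigr)\iff \state'(\gen'),
\]
and the left-hand side of each biconditional is exactly the Boolean value assigned to $\gen'$ by $\trans(\state)$ in Equation~\eqref{eq:trans}. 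So the conjunction is true precisely when $\state'(\gen') = (\trans(\state))(\gen')$ for every $\gen'$, i.e. precisely when $\state' = \trans(\state)$. One small bookkeeping point: both $\state$ and $\state'$ must satisfy $\state(\gene_\inputG)=true$; this holds by definition of $\states(\topp)$, and is consistent with $\trans$ preserving it because $\thresh(\gene_\inputG)<0$ and the empty-or-nonnegative sum of incoming weights to $\gene_\inputG$ is $0 > \thresh(\gene_\inputG)$ (recall no gene represses or, in the relevant direction, the input gene is always re-activated). I would state this as a one-line observation rather than belabour it.

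Given the single-step claim, the proof proceeds in two directions. First, take any feasible run of $(\topp,\wVar)[\weight/\wVar]$, say $\state_0,\state_1,\dots$. By definition of that object, $\transVar'(\state_n,\state_{n+1}) = \transVar(\state_n,\state_{n+1})[\weight/\wVar]$ holds for all $n$ — here I note that for a ground system there is no satisfiability quantifier left, ``feasible'' just means each edge label evaluates to true — hence by the claim $\trans(\state_n)=\state_{n+1}$ for all $n$, so $\state_0,\state_1,\dots$ is a run of $\grn=(\topp,\weight)$ in the sense of Definition~\ref{def:grn-sem}, i.e. it lies in $\llbracket(\topp,\weight)\rrbracket$. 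Conversely, any run of $\grn$ satisfies $\trans(\state_n)=\state_{n+1}$ for all $n$, so by the claim each substituted label holds, so the sequence is a feasible run of $(\topp,\wVar)[\weight/\wVar]$.

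Honestly, there is no real obstacle here: the lemma is a sanity-check bridging the symbolic construction (Definition~\ref{def:pts}) and the concrete semantics (Definition~\ref{def:grn-sem}), and the only thing to get right is that substitution of a concrete weight function into the symbolic biconditional label collapses it exactly to the characteristic function of $\trans$. The mild subtlety worth a sentence is the meaning of ``feasible'' once $\wVar$ has been replaced by concrete rationals: the run constraint becomes a conjunction of ground inequalities, which is satisfiable iff it is true, so feasibility of the ground run degenerates to ``every edge label is true'', matching the pointwise characterization above. With that clarified, the two inclusions are symmetric and the calculation is the routine unfolding already indicated.
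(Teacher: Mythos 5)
Your proposal is correct and matches the paper's route: the paper simply remarks that the lemma ``follows from the definition of the semantics for \DETGRN'', and your argument is exactly that routine unfolding, made explicit via the single-step equivalence between the substituted edge label $\transVar(\state,\state')[\weight/\wVar]$ and the condition $\trans(\state)=\state'$, plus the correct observation that feasibility of a ground run degenerates to every edge label being true. Nothing is missing; your bookkeeping about $\gene_{\inputG}$ (no incoming edges, $\thresh(\gene_{\inputG})<0$) is also consistent with the definitions.
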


The proof of the above lemma follows from the definition of the semantics for \DETGRN.
Note that the run constraints are conjunctions of linear (non)-strict
inequalities.
Therefore, we may apply efficient SMT
solvers to analyze $(\topp, \wVar )$. 

\subsection{Constraint generation via model checking}

\begin{figure}[t]
  \centering
  \begin{minipage}[t]{0.00\linewidth}
  \algLineNumbers{0pt}{2}{1}{5}    
  \algLineNumbers{0pt}{4}{6}{15}
  \end{minipage}
  \begin{minipage}[t]{0.99\linewidth}
    \algFunction $\algGenCons( ( \topp, \wVar ) = 
    (\states(\topp), \transVar), \prop ) $\\
    \algBegin \\
    \tabt $goodCons := true $ \\
    \tabt \algForeach $\state \in \states(\topp) $ \algDo \\
    \tabT $goodCons := \algGenConsRec( \state, true, goodCons, \states(\topp), 
    \transVar,\prop)$
    \\
    \tabt \algDone\\
    \tabt \algReturn $goodCons$\\
    \algEnd \\
    \\
    \algFunction $\algGenConsRec( run.\state, runCons, goodCons,\states(\topp), \transVar,\prop )$\\
    \algBegin \\
    \tabT \algIf  $|run.\state| < |\states(\topp)|$ \algThen \\
    \tabTT \algForeach $\state' \in \states(\topp) $ \algDo \\
    \tabTTT $runCons' := runCons \land \transVar(\state,\state')$\\
    \tabTTT \algIf $goodCons \wedge runCons'$ is sat \algThen \\
    \tabTTTT \algIf $run.\state\state' \models \neg\prop$ \algThen     \hfill\algComment{check may return undef}\\
    \tabTTTTT $goodCons := goodCons \land \neg runCons'$ 
    \\
    \tabTTTT \algElse \\
    \tabTTTTT $goodCons := \algGenConsRec( run.\state\state', runCons', goodCons, \states(\topp), \transVar, \prop )$ \\
    \tabTT \algScopeRule{2.65cm} \algDone\\
    \tabT \algReturn $goodCons$ \\
    \algEnd 
  \end{minipage}
  \caption{Counterexample guided computation of the mutation
space feasible wrt. $\prop$. Let ``." be an operator that appends two sequences. 
$run.\state\state' \models \neg\prop$ can be implemented by converting
$\neg\prop$ into a B\"uchi automaton and searching for an accepting run over
$run.\state\state'$. 
However, a finite path may be too short to decide whether $\prop$ holds or not.
In that case, the condition at line 10 fails.
Since $( \topp, \wVar )$ is finite, the finite runs are bound
to form lassos within $|\states(\topp)|$ steps.
If a finite run forms a lasso, then the truth value of
$run.\state\state' \models \neg\prop$ will be determined.
}
\label{fig:model-checking}
\end{figure}




Now our goal is to synthesize the constraints over $\wVar$ which
characterise exactly the set of weight functions $\weight$, for which
$(\topp,{\weight})$ satisfies $\prop$.
%
Each feasible run violating $\prop$ reports a set of constraints which weight parameters should avoid.
Once all runs violating $\prop$ are accounted for, the desired region of weights is completely characterized.
More explicitly, the desired space of weights is obtained by conjuncting negations of run constraints of all feasible runs that satisfy $\neg\prop$.

In~\figref{fig:model-checking},
we present our algorithm \algGenCons for the constraint generation.
\algGenCons unfolds $(\topp, \wVar )$ in depth-first-order
manner to search for runs which satisfy $\neg\prop$.
At line 3, \algGenCons calls recursive function \algGenConsRec to do
the unfolding for each state in $\states(\topp)$.
\algGenConsRec takes six input parameters.
The parameter $run.\state$ and $runCons$ are the states of the
currently traced run and its respective run constraint.
The third parameter are the constraints, collected due to the
discovery of counterexamples, {\em i.e.}, runs which violate $\prop$.
The forth, fifth and sixth parameter are the description of the
input transition system and the LTL property $\prop$.
Since \DETGRN s have deterministic transitions, we only need to look
for the lassos upto length $|\states(\topp)|$ for LTL model checking.
Therefore, we execute the loop at line 7 only if the $run.\state$ has
length smaller than $|\states(\topp)|$.
The loop iterates over each state in $\states(\topp)$.
The condition at line 9 checks if $run.\state\state'$ is feasible
and, if it is not, the loop goes to another iteration.
Otherwise, the condition at line 10 checks if $run.\state\state'
\models \neg\prop$.
Note that $run.\state\state' \models \neg\prop $ may also return 
undefined because the run may be too short to decide the LTL property.
If the condition returns true, we add negation of the run constraint
in $goodCons$.
Otherwise, we make a recursive call to extend the run at line 13.
$goodCons$ tells us the set of values of $\wVar$ for which we have
discovered no counterexample.
\algGenCons returns $goodCons$ at the end.

Since run constraints are always a conjunction of linear inequalities,
$goodCons$ is a conjunction of clauses over linear inequalities.
Therefore, we can apply efficient SMT technology to evaluate the
condition at line 9.
The following theorem states that the algorithm \algGenCons computes
the parameter region which satisfies property $\varphi$. 

\begin{theorem}\rm
For every weight function $\weight\in\weightSet$, the desired set of
weight functions for which a \DETGRN\; satisfies $\prop$ equals the
weight functions which satisfy the constraints returned by
$\algGenCons$:
\[(\topp,\weight)\models\prop \hbox{ iff }
 \weight\models\algGenCons((\topp, v), \varphi).
\]
\end{theorem}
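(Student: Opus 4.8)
The plan is to prove the biconditional by establishing that $\algGenCons$ maintains a loop invariant on $goodCons$, namely that at every point $goodCons$ is exactly the conjunction of negated run constraints $\neg\cons(\run)$ over all feasible runs $\run$ of $(\topp,\wVar)$ that have been fully explored so far and that satisfy $\neg\prop$ (equivalently, violate $\prop$). Granting this invariant, at termination $goodCons = \bigwedge \{\neg\cons(\run) \mid \run \in \llbracket(\topp,\wVar)\rrbracket,\ \run \models \neg\prop\}$, so a weight function $\weight$ satisfies $\algGenCons((\topp,v),\prop)$ iff for every feasible run $\run$ violating $\prop$ we have $\weight \not\models \cons(\run)$, i.e. $\weight$ is inconsistent with the run constraint of every property-violating feasible run. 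By Lemma~\ref{lem:feasruns} the feasible runs of $(\topp,\wVar)[\weight/\wVar]$ are exactly $\llbracket(\topp,\weight)\rrbracket$, and a feasible run of $(\topp,\wVar)$ specializes to an actual run of $(\topp,\weight)$ precisely when $\weight \models \cons(\run)$. Hence $\weight \models \algGenCons$ iff $(\topp,\weight)$ has no run violating $\prop$, iff $(\topp,\weight) \models \prop$.

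First I would set up the combinatorial backbone: since transitions of a \DETGRN\ are deterministic, every infinite run from a fixed start state is eventually periodic, and the periodic pattern (a lasso) is reached within $|\states(\topp)|$ steps; this is the justification already sketched in the caption of \figref{fig:model-checking} for only unfolding to depth $|\states(\topp)|$. I would make this precise as a preliminary claim: for any $\weight$ and any start state, the run of $(\topp,\weight)$ is determined by its first $|\states(\topp)|+1$ states, and whether it satisfies $\prop$ is decided by that finite prefix (since the prefix contains a full lasso). Correspondingly, for the parametrized system, a feasible run is a lasso whose feasibility is captured by the run constraint of its length-$\le |\states(\topp)|$ prefix, and $\run \models \neg\prop$ is decidable on that prefix. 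This lets me treat $\algGenConsRec$ as enumerating a finite tree of run prefixes.

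Next I would prove the invariant by induction on the recursion. The base/pruning cases are the interesting bookkeeping: (i) when $goodCons \wedge runCons'$ is unsatisfiable at line 10 the recursion is cut, but this is sound because any feasible run extending $run.\state\state'$ has a run constraint implying $runCons'$, and if $goodCons$ already entails $\neg runCons'$ then $goodCons$ already entails the negation of every such extension's constraint, so nothing is lost; (ii) when $run.\state\state' \models \neg\prop$ returns true we record $\neg runCons'$ — I need that any feasible run through this prefix violates $\prop$ (monotonicity: once a finite prefix forces $\neg\prop$ via an accepting lasso, all extensions do too) and conversely every property-violating feasible run has a prefix at which the check first succeeds, so no violating run's constraint is omitted; (iii) when the check returns \emph{undefined} (prefix too short) we neither record nor prune but recurse, and I must argue this never terminates prematurely — because within $|\states(\topp)|$ steps the prefix forms a lasso and the check becomes defined, so the $|run.\state| < |\states(\topp)|$ guard at line 8 is never the reason a violating run is missed. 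I would also note the outer loop over all start states in $\algGenCons$ ensures all runs (recall $\llbracket\grn\rrbracket$ has no designated initial state) are covered.

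The main obstacle is the \emph{undefined}/short-prefix case and its interaction with the satisfiability pruning at line 10: I must show that the combination of the depth bound $|\states(\topp)|$, the lasso argument, and the SAT-based pruning never discards a property-violating feasible run, and dually never records a constraint that excludes a weight function whose GRN actually satisfies $\prop$. Concretely, the delicate point is that $runCons'$ is a constraint on $\wVar$ shared by \emph{all} concrete runs following that symbolic prefix; I need the monotonicity lemma that if a symbolic prefix $\pi$ already has $\pi \models \neg\prop$ (defined and true), then for every $\weight \models \cons(\pi)$ the unique concrete run of $(\topp,\weight)$ extending $\pi$ also violates $\prop$ — this rests on the fact that LTL violation witnessed on a prefix containing a full lasso is preserved under the (unique, deterministic) continuation. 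Once that lemma is in hand, the pruning at line 10 and the recording at line 11 are both sound and complete, and the rest is the routine induction and the translation through Lemma~\ref{lem:feasruns} sketched above.
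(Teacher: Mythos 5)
Your proposal is correct, and it actually argues more than the paper's own proof does, so the two take genuinely different routes. The paper's proof consists solely of the set-theoretic translation: it declares that the theorem ``amounts to'' showing $A=\{\weight\mid(\topp,\weight)\models\prop\}$ equals $B=\bigcap_{\run\in\llbracket(\topp,\wVar)\rrbracket\wedge\run\models\neg\prop}\{\weight\mid\weight\models\neg\cons(\run)\}$, and establishes this by complementation (De Morgan) using, implicitly, \lemref{lem:feasruns}; the identification of the formula returned by $\algGenCons$ with $B$ is taken for granted, being justified only informally in the text and in the caption of \figref{fig:model-checking}. Your proof contains that same complementation step as its final translation, but in addition supplies the algorithmic half that the paper omits: the invariant on $goodCons$, the soundness of the unsat-pruning at line 9 (if $goodCons\wedge runCons'$ is unsatisfiable then $goodCons$ already entails the negation of every extension's run constraint), the determinism/pigeonhole argument that every feasible run is a lasso decided on a bounded prefix, and the monotonicity lemma that a prefix whose lasso witnesses $\neg\prop$ forces violation of $\prop$ for every $\weight$ satisfying the prefix constraint (and, dually, that recording only the prefix constraint still excludes every $\weight$ admitting the infinite violating run, since $\cons(\run)$ implies the prefix constraint). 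What the paper's route buys is brevity, at the cost of leaving the correctness of $\algGenCons$ itself unproved; your route buys a complete argument and correctly isolates the delicate points. Two small repairs: state the invariant as semantic equivalence rather than ``exactly the conjunction'' (pruned branches never literally contribute their conjunct, as your own case (i) shows), and note that your claim that the guard $|run.\state|<|\states(\topp)|$ never loses a violating run inherits an off-by-one from the paper's caption --- a state repetition is only guaranteed after $|\states(\topp)|+1$ states, so the guard must be read as permitting prefixes of that length (or $|run.\state|$ as counting transitions); this is a defect of the pseudocode rather than of your argument.
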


\begin{proof}
The statement amounts to showing that the sets  $A = \{\weight\mid (\topp,\weight)\models\prop\}$ and $B = \bigcap_{\run\in\llbracket(\topp,v)\rrbracket\wedge \run\models\neg\prop}\{\weight\mid \weight\models \neg\cons(\run)\}$ are equivalent.
Notice that
\begin{align*}
W\setminus A & = \{\weight\mid \exists \run\in\llbracket(\topp,v)\rrbracket\hbox{  such that } \weight\models  cons(\run)\wedge\run\models\neg\prop\} \\
& =  \bigcup_{\run\in\llbracket(\topp,v)\rrbracket\wedge\run\models\neg\prop}\{\weight\mid \weight\models cons(\run)\} \\
& = W \setminus \bigcap_{\run\in\llbracket(\topp,v)\rrbracket\wedge\run\models\neg\prop} \{\weight\mid \weight\models \neg cons(\run)\}.
\end{align*}
\end{proof}

We use the above presentation of the algorithm for easy readability.
However, our implementation of the above algorithm differs
significantly from the presentation.
We follow the encoding of~\cite{biere2003bounded} to encode the path
exploration as a bounded-model checking problem.
Further details about implementation are available in
Section~\ref{sec:experiments}.
The algorithm has exponential complexity in the size of $\topp$.
However, one may view the above procedure as the clause
learning in SMT solvers, where clauses are learnt when the LTL formula
is violated \cite{zhang2001efficient}.
Similar to SMT solvers, in practice,
this algorithm may not suffer from the worst-case complexity.

\begin{example}\rm
\label{ex:oscilations}
The GRN \texttt{osc3} (shown in ~\figref{fig:benchmarks}) was the model
of a pioneering work in synthetic biology \cite{elowitz2000synthetic},
and it is known to provide oscillatory behaviour: each gene should
alternate its expression between `on' and `off' state:
\begin{align*}
\prop_{3} = \bigwedge_{v\in\{A,B,C\}}(v \Rightarrow \ltlfinally\neg{v})
\wedge 
(\neg{v} \Rightarrow \ltlfinally {v}).
\end{align*}
The solutions are the constraints:
\begin{align*}
(\topp,\weight) \models \prop_{3} & \hbox{ iff } (i_A>\thresh_A)\wedge(i_B>\thresh_B)\wedge(i_C>\thresh_C) \wedge\\
 &(i_B-\weight_{AB}\leq\thresh_B) \wedge (i_C-\weight_{BC}\leq\thresh_C) \wedge (i_A-\weight_{CA}\leq\thresh_A).
\end{align*} 
\end{example}

\section{Computing Robustness}
\label{sec:robustness}
In this section, we present an application of our parameter synthesis
algorithm, namely computing robustness of GRNs in presence of
mutations.
%
%
To this end, we formalize \POPGRN\;and its robustness.
Then, we present a method to compute the robustness using our synthesized parameters.

A \POPGRN\; models a large number of \DETGRN s with varying weights. 
All the \DETGRN s are defined over the same \GRNlandscape, hence they
differ only in their weight functions.
The \POPGRN\; is characterised by the \GRNlandscape\; $\topp$ and a
probability distribution over the weight functions.
In the experimental section, we will use the range of weights $\weightSet$ and the distribution $\pi$ based on the mutation model outlined in the Appendix. 

\begin{defn}[\POPGRN] \rm \label{def:grnP}
A \POPGRN\; $\grnP$ is a pair $(\topp,\pi)$, where
$\pi:\weightSet\rightarrow[0,1]$ is a probability distribution over
all weight functions from \GRNlandscape\;$\topp$. 
\end{defn}

%


We write $\prop({\grnP})\in[0,1]$ to denote an expectation that a GRN
instantiated from a \POPGRN\;$\grnP=(\topp,\pi)$ satisfies
$\prop$.
The value $\prop({\grnP})$ is in the interval $[0,1]$ and we call it
robustness.

\begin{defn} [Robustness]\rm
\label{def:robustness}
Let $\grnP=(\topp,\pi)$ be a \POPGRN, and $\prop$ be an LTL
formula which expresses the desired LTL property.
Then, robustness of $\grnP$ with respect to property $\prop$ is given
by
\[
\prop( \grnP ):=\sum_{\{\weight\mid\llbracket(\topp,\weight) \rrbracket \models\prop \}}\pi(\weight)
\]
\end{defn} 

%
%
The above definition extends that of~\cite{Wagner2006_robustness}, because
it allows for expressing any LTL property as a phenotype, and hence it
can capture more complex properties such as oscillatory behaviour.

In the following, we will present an algorithm for computing the
robustness, which employs algorithm \algGenCons. 

\subsection{Evaluating robustness}
\label{sec:exact}

Let us suppose we get a \POPGRN\; $\grnP = ( \topp, \pi )$ and LTL
property $\prop$ as input to compute robustness.

For small size of \GRNlandscape\;$\topp$, robustness can be computed
by explicitly enumerating all the \DETGRN s from $\topp$, and
verifying each \DETGRN\; against $\prop$.
The probabilities of all satisfying \DETGRN s are added up.
However, the exhaustive enumeration of the \GRNlandscape\; is often
intractable due to a large range of weight functions $\weightSet$ in
$\topp$.
In those cases, the robustness is estimated statistically: a number of
\DETGRN s are sampled from $\topp$ according to the distribution
$\pi$, and the fraction of satisfying \DETGRN s is stored. 
The sampling experiment is repeated a number of times, and the mean
(respectively variance) of the stored values are reported as
robustness (respectively precision).
%
%

Depending on how a sampled \DETGRN\;is verified against the LTL
property, we have two methods:
\begin{itemize}
\item In the first method, which we will call \emph{execution} method,
each sampled \DETGRN\;is verified by executing the \DETGRN\; from all
initial states and checking if each run satisfies $\prop$;
\item In the second method, which we will call \emph{evaluation}
method, the constraints are first precomputed with $\algGenCons$, and
each sampled \DETGRN\; is verified by evaluating the constraints.
\end{itemize}

Clearly, the time of computing the constraints initially renders the
evaluation method less efficient.
This cost is amortized when verifying a \DETGRN\;by constraint
evaluation is faster than by execution.
In the experimental section, we compare the overall performance
of the two approaches on a number of GRNs from literature. 

%





\section{Experimental results}
\label{sec:experiments}

\begin{figure}[t]
  \centering
    \begin{tabular}[t]{r@{\quad}c@{\quad}l@{\quad}l}
      & & \multicolumn{1}{c}{Property} & \multicolumn{1}{c}{Space size} \\
      \texttt{mi:} &
\begin{tabular}{l}
\begin{tikzpicture}[scale=0.6,->,>=stealth',
  shorten >=1pt,auto,node distance=1.7cm,
  thick,main node/.style={circle,draw,transform shape}]

  \node[main node] (A) {Y};
  \node[main node] (B) [right of=A] {Z};

  \path[every node/.style={font=\sffamily\small}]
    (A) edge [bend left,-|] (B)
    (B) edge [bend left,-|]  (A);
\end{tikzpicture}
\end{tabular}
& 
$(Y\bar{Z} \hspace{-4pt}\implies\hspace{-4pt} \ltlglobally Y\bar{Z}) 
\land ( Z\bar{Y} \hspace{-4pt}\implies\hspace{-4pt} \ltlglobally Z\bar{Y})$
&$4225$
\\
\texttt{misa:} &
\begin{tabular}{l}
\begin{tikzpicture}[scale=0.6,->,>=stealth',shorten >=1pt,auto,
  node distance=1.7cm,
  thick,main node/.style={circle,draw,transform shape}]

  \node[main node] (A) {Y};
  \node[main node] (B) [right of=A] {Z};
  \path[every node/.style={font=\sffamily\small}]
    (A) edge [bend left,-|] (B)
        edge [loop left]  (A)
    (B) edge [bend left,-|]  (A)
        edge [loop right] (B)
        ;
\end{tikzpicture}
\end{tabular}
&
$(Y\bar{Z}\hspace{-4pt}\implies\hspace{-4pt} \ltlglobally Y\bar{Z}) 
\land ( Z\bar{Y}\hspace{-4pt}\implies\hspace{-4pt}\ltlglobally Z\bar{Y})$
&
$105625$
\\
\texttt{qi:} &
\begin{tabular}{l}
\begin{tikzpicture}[scale=0.6,->,>=stealth',shorten >=1pt,auto,node distance=1.7cm,
  thick,main node/.style={circle,draw,transform shape}]

  \node[main node] (A) {Y};
  \node[main node] (B) [right of=A] {Z};
  \node[main node] (C) [below of=B] {R};
  \node[main node] (D) [left  of=C] {S};

  \path[every node/.style={font=\sffamily\small}]
    (A) edge [bend left,-|] (B) edge [bend left   ] (D)
    (B) edge [bend left   ] (C) edge [bend left,-|] (A)
    (C) edge [bend left,-|] (D) edge [bend left   ] (B)
    (D) edge [bend left   ] (A) edge [bend left,-|] (C)
        ;
\end{tikzpicture}
\end{tabular}
&
\begin{tabular}{l}
$(YS\bar{Z}\bar{R} \implies \ltlglobally YS\bar{Z}\bar{R})\hspace{4pt} \land $ \\
$(\bar{Y}\bar{S}ZR \implies \ltlglobally \bar{Y}\bar{S}ZR) $
\end{tabular}
&  $\approx10^9$
\\
\texttt{cc:}
&
\begin{tabular}{l}
\begin{tikzpicture}[scale=0.6,->,>=stealth',shorten >=1pt,
  auto,node distance=1.7cm,
  thick,main node/.style={circle,draw,transform shape}]

  \node[main node] (A) at (-2,0) {S};
  \node[main node] (B) at (-1,0) {Z};
  \node[main node] (C) at (0,1) {X};
  \node[main node] (D) at (1,0) {R};
  \node[main node] (E) at (2,0) {T};

  \path[every node/.style={font=\sffamily\small}]
    (A) edge (B)
    (B) edge [bend left,-|] (C)
    (C) edge [bend left,-|] (B) edge [bend right   ] (D)
    (D) edge [bend right   ] (C)
    (E) edge [-|] (D)
        ;
\end{tikzpicture}
\end{tabular}
& 
$\ltlfinally \ltlglobally X  \lor \ltlfinally \ltlglobally \bar{X}$
&  $\approx 10^{10}$
\\
\texttt{ncc:}
&
\begin{tabular}{l}
\begin{tikzpicture}[scale=0.6,->,>=stealth',shorten >=1pt,auto,node distance=1.7cm,
  thick,main node/.style={circle,draw,transform shape}]

  \node[main node] (Q) at (-1,1)  {Q};
  \node[main node] (S) at (1,1)  {S};
  \node[main node] (Z) at (2,0)  {Z};
  \node[main node] (R) at (1,-1) {R};
  \node[main node] (P) at (-1,-1) {P};
  \node[main node] (Y) at (-2,0) {Y};

  \path[every node/.style={font=\sffamily\small}]
    (Q) edge [bend right,->] (Y)
    (Y) edge [bend right,->] (Q)
    (Y) edge [bend right=18,->] (S)
    (Y) edge [bend left=18,-|] (R)
    (Z) edge [bend left=18, -|] (Q)
    (Z) edge [bend right=18, ->] (P)
    (Z) edge [bend left, -|] (S)
    (S) edge [bend left, -|] (Z)
    (Z) edge [bend right, ->] (R)
    (R) edge [bend right, ->] (Z)
    (P) edge [bend left, -|] (Y)
    (Y) edge [bend left, -|] (P);
\end{tikzpicture}
\end{tabular}
& 
\begin{tabular}{l}
$(YQS\bar{Z}\bar{P}\bar{R}\hspace{-4pt}\implies\hspace{-4pt}\ltlglobally YQS\bar{Z}\bar{P}\bar{R})\hspace{4pt} \land$\\
$(\bar{Y}\bar{Q}\bar{S}ZPR\hspace{-4pt}\implies\hspace{-4pt}\ltlglobally \bar{Y}\bar{Q}\bar{S}ZPR)$
\end{tabular}
&  $\approx10^{18}$
\\
\texttt{osc}N\texttt{:}
&
\begin{tabular}{l}
\begin{tikzpicture}[scale=0.6,->,>=stealth',shorten >=1pt,auto,node distance=1.7cm,
  thick,main node/.style={circle,draw,transform shape}]

  \node[main node] (A) {$X_N$};
  \node[main node] (B) at ([shift={(120:1.7)}]A) {$X_1$};
  \node[main node] (C) at ([shift={(240:1.7)}]B) {$X_2$};

  \path[every node/.style={font=\sffamily\small}]
    (A) edge [bend right,-|] (B)
    (B) edge [bend right,-|] (C)
    (C) edge [dashed, bend right,-|] (A)
        ;
\end{tikzpicture}
\end{tabular}
&
\begin{tabular}{l}
  $X_1\hspace{-4pt}\implies\hspace{-4pt}\ltlfinally\bar{X_1} \land \bar{X_1}\hspace{-4pt}\implies\hspace{-4pt}\ltlfinally X_1 \hspace{8pt} \land$\\
  $X_2\hspace{-4pt}\implies\hspace{-4pt}\ltlfinally\bar{X_2} \land \bar{X_2}\hspace{-4pt}\implies\hspace{-4pt}\ltlfinally X_2 \hspace{8pt} \land $\\
  {\centering \dots}\\
  $X_N\hspace{-4pt}\implies\hspace{-4pt}\ltlfinally\bar{X}_N \land \bar{X}_N\hspace{-4pt}\implies\hspace{-4pt}\ltlfinally X_N $
\end{tabular}
&
\begin{tabular}{l}
 \texttt{3:}$274625$ \\
 \texttt{5:}$\approx10^9$\\
 \texttt{7:}$\approx10^{12}$
\end{tabular}
\\
\end{tabular}
 \\








  \caption{GRN benchmarks. 
\texttt{mi}, \texttt{misa} (mutual inhibition), \texttt{qi} (quad inhibition),
and \texttt{ncc} (cell cycle switch) satisfy different forms of bistability.
For the networks \texttt{ci} (cell cycle switch), the value of gene
eventually stabilizes~\cite{cardelli2012cell}.
In \texttt{osc3}, also known as the \emph{repressilator}
\cite{elowitz2000synthetic}, the gene values alternate.
\texttt{osc5} and \texttt{osc7} (not shown) are generalizations of
\texttt{osc3}, and also exhibit oscilating behavior.
}
  \label{fig:benchmarks}
\end{figure}



\newcommand{\sci}[2]{{#1\times 10^{#2}}}

We implemented a tool which synthesizes the parameter constraints for
a given LTL property (explained in \secref{sec:algo}), and the methods
for computing the mutational robustness (explained in
\secref{sec:exact}).
We ran our tool on a set of GRNs from literature.


{\subsection{Implementation}}
\label{sec:implementation}
Our implementation does not explicitly construct the parametrised
transition system described in \secref{sec:algo} (\dfnref{def:pts} and
Alg.~\ref{fig:model-checking}).
Instead, we encode the bounded model-checking
(Alg.~\ref{fig:model-checking}) as a satisfiability problem, and we
use an SMT solver to efficiently find $goodCons$. 
More concretely, we initially build a formula which encodes the
parametrized transition system and it is satisfied if and only if some
run of $( \topp, \wVar )$ satisfies $\neg\prop$.
If the encoding formula is satisfiable, the constraints $cons(\run)$
along the run are collected, and $\neg cons(\run)$ is added to
$goodCons$.
Then, we expand the encoding formula by adding $\neg cons(\run)$, so
as to rule out finding the same run again. 
We continue the search until no satisfying assignment of the encoding
formula can be found. 
The algorithm always terminates because the validity of the property
is always decided on finite runs (as explained in \secref{sec:algo}).

The implementation involves 8k lines of C++ and we use
Z3 SMT solver as the solving engine.
We use CUDD to reduce the size of the Boolean structure of the
resulting formula. 
We ran the experiments on a GridEngine managed cluster system. 
Our tool, as well as the examples and the simulation results are
available online.\footnote{\url{http://pub.ist.ac.at/~mgiacobbe/grnmc.tar.gz}}.

\begin{figure}[tp]
\centering
\includegraphics[width=0.32\textwidth]{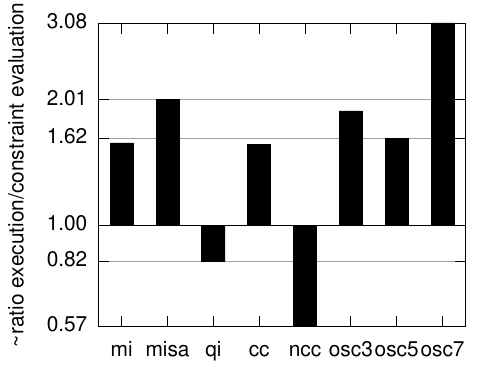}
\includegraphics[width=0.32\textwidth]{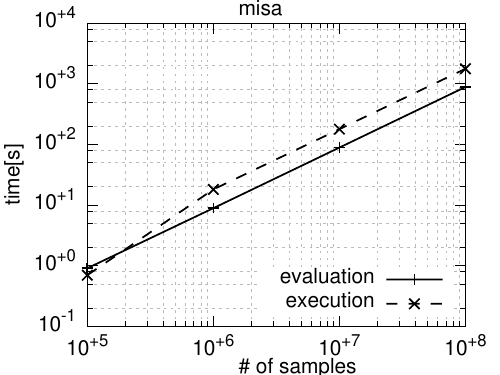}
\includegraphics[width=0.32\textwidth]{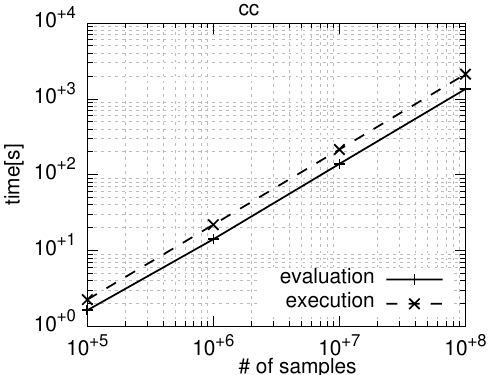}
\includegraphics[width=0.32\textwidth]{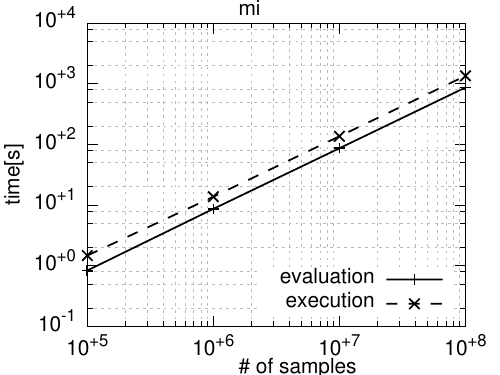}
\includegraphics[width=0.32\textwidth]{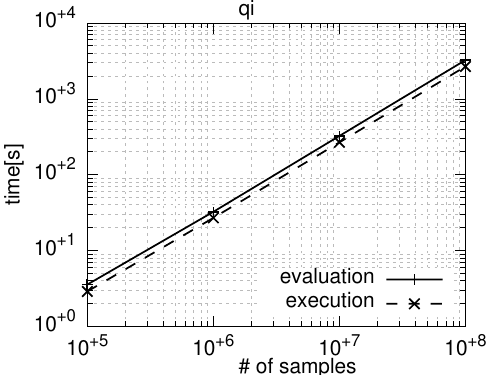}
\includegraphics[width=0.32\textwidth]{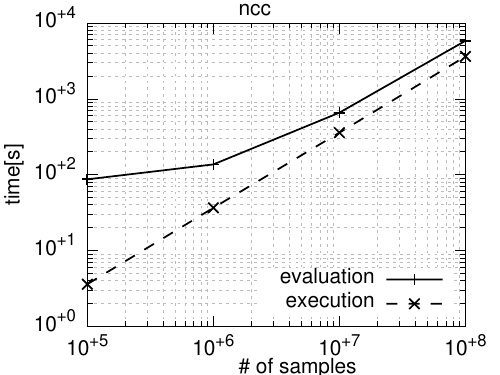}
\includegraphics[width=0.32\textwidth]{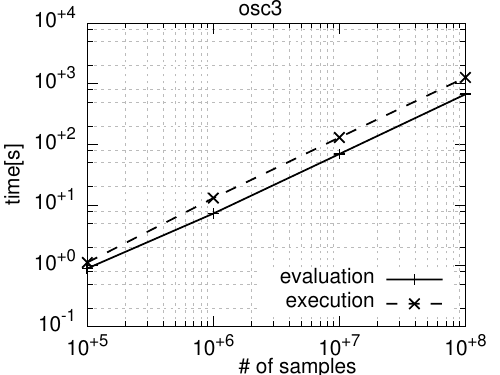}
\includegraphics[width=0.32\textwidth]{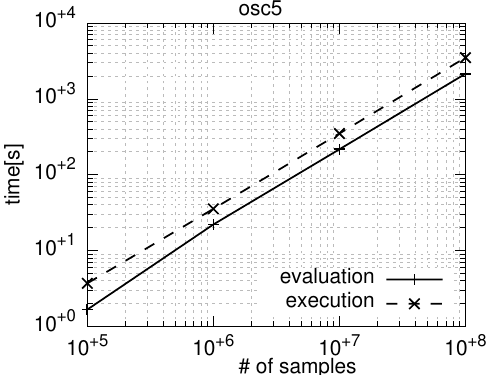}
\includegraphics[width=0.32\textwidth]{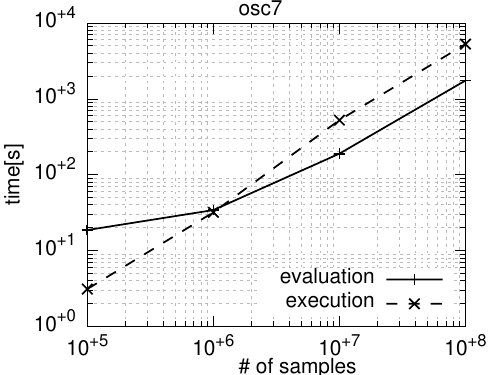}
\caption{
The comparison in performance when mutational robustness is statistically estimated, and a property check is performed either by evaluation, or by execution (see \secref{sec:exact} for the description of the two methods).
The bar (top-left) shows the ratio of average times needed to verify the property of one sampled \DETGRN.
For example, for \texttt{osc7}, the performance of evaluation method is more than three times faster.
%
%
The other graphs show how the robustness computation time depends on the total number of sampled GRNs (in order to obtain robustness and to estimate the precision, we computed the mean of $100$ experiments, each containing a number of samples ranging from $10^3$ to $10^6$). 
The graph is shown in log-log scale.
The non-linear slope of the evaluation method is most notable in examples \texttt{mcc} and \texttt{osc7}, and it is due to the longer time used for compute the constraints.
%
%
}
\label{tab:time}
\end{figure}
%



%
\subsection{Performance evaluation}
We chose eight GRN topologies as benchmarks for our tool.
The benchmarks are presented in~\figref{fig:benchmarks}.
The first five of the GRN topologies are collected
from~\cite{cardellimorphisms}.
On these benchmarks we check for the steady-state properties.
On the final three GRN topologies, we check for the oscillatory behavior.
%
%
The results are presented in \figref{tab:time}.

We ran the robustness computation by the evaluation and execution
methods (the methods are described in \secref{sec:exact}).
In order to obtain robustness and to estimate the precision, we
computed the mean of $100$ experiments, each containing a number of
samples ranging from $10^3$ to $10^6$.
The total computation time in the execution methods linearly depends
on the number of samples used.
The total computation time in the evaluation method depends linearly
on the number of samples, but initially needs time to compute the
constraints.
Technically, the time needed to compute robustness by execution method
is $t_{ex} = k_{ex}p$, and the time needed to compute robustness by
evaluation approach $t_{ev} = k_{ev}p+t_{c}$, where $p$ represents the
total number of samples used, $t_c$ is the time to compute the
constraints, and $k_{ex}$ (resp. $k_{ev}$) is the time needed to
verify the property by evaluation (resp. execution).
 We used linear regression to estimate the parameters $k_{ex}$ and
$k_{ev}$, and we present the ratio $\frac{k_{ex}}{k_{ev}}$ in top-left
position of \figref{tab:time}.
The results indicate that on six out of eight tested networks,
evaluation is more efficient than execution.
For some networks, such as \texttt{osc7}, the time for computing the
constraints is large, and the gain in performance becomes visible only
once the number of samples is larger than $10^6$.

\section{Conclusion and discussion}
\label{sec:conclusion}
In this paper, we pursued formal analysis of Wagner's GRN model, which
allows symbolic reasoning about the behavior of GRNs under parameter
perturbations.
More precisely, for a given space of GRNs and a property specified in LTL, we have synthesized the space of parameters for which the concrete, individual GRN from a given space satisfies the property.
The resulting space of parameters is represented by complex linear inequalities.
In our analysis, we have encoded a bounded model-checking search into a satisfiability problem, and we used efficient SMT solvers to find the desired constraints. 
We demonstrated that these constraints can be used to efficiently compute the mutational robustness of populations of GRNs. 
Our results have shown the cases in which the computation can be three times faster than the standard (simulation) techniques employed in computational biology.

While computing mutational robustness is one of the applications of our synthesized constraints, 
the constraints allow to efficiently answer many other questions that are very difficult or impossible to answer by executing the sampled GRNs.
In our future work, we aim to work on further applications of our method, such as parameter sensitivity analysis for Wagner's model.
Moreover, we plan to work on the method for exact computation of robustness by applying 
the point counting algorithm~\cite{barvinok1999algorithmic}.

The Wagner's model of GRN is maybe the simplest dynamical model of a GRN 
-- there are many ways to add expressiveness to it: for example, by incorporating 
multi-state expression level of genes, non-determinism, asynchronous
updates, stochasticity. 
We are planning to study these variations and chart the territory of
applicability of our method.
%





%

%
%
\bibliographystyle{abbrv}
\bibliography{biblio}

\appendix

\section*{Mutation model} 
\label{sec:bio-mutation-model}

In this section, we present the mutation model that is grounded in evolutionary biology research.
%
Genetic mutations refer to events of changing base pairs in the
DNA sequence of the genes and they may disturb the regular
functioning of the host cell.
Such mutations affect the weight functions and we assume their range to be between
a maximum weight and zero. 

\subsubsection{Modeling mutations in a nucleotide.}
\label{sec:single}
The mutations of a single nucleotide follow a discrete-time Markov
chain (DTMC), illustrated in \figref{fig:a}a).
When the DNA is passed from the mother to the daughter cell, a
`correct' nucleotide (in figure it is the $\ade$) can mutate to each
different value ($\tim$, $\cyt$ or $\gua$) with some probability. 
%
%
In the model in \figref{fig:a}a), the probability of each possible
mutation is $\frac{p}{3}$, and hence, the probability of retaining the
same nucleotide is $1-p$.
In \figref{fig:a}b), there is a process where all mutated states are
lumped together - the probability to get to the `correct' nucleotide
is equal to $\frac{p}{3}$, and to remain mutated is therefore
$(1-\frac{p}{3})$.
%
%
We will refer to the general probabilities of the lumped process with
a matrix
\[ P^{id}=\left[ \begin{array}{ccc}
p_{00} & p_{01}  \\
p_{10} &p_{11}  
\end{array} \right],
\] 
with the intuition $0$ being the non-mutated state and $1$ the
mutated state.
Let $X_n^i\in\{\zero,\one\}$ be a process (DTMC) reporting whether the
$i$-th nucleotide is in its `correct' configuration, or mutated
configuration at $n$-th generation.
The number $P(X_n^i=\one)$ can be interpreted also as that the
fraction of mutated $i$-th nucleotides observed in the population at
generation $n$. 
Notably, for the values shown in \figref{fig:a}b), the stationary
distribution of $X_n^i$ is $(0.75,0.25)$, independently of the
probability of a single mutation $p$.

\begin{figure}[t]
\begin{center}
\includegraphics[scale=0.35]{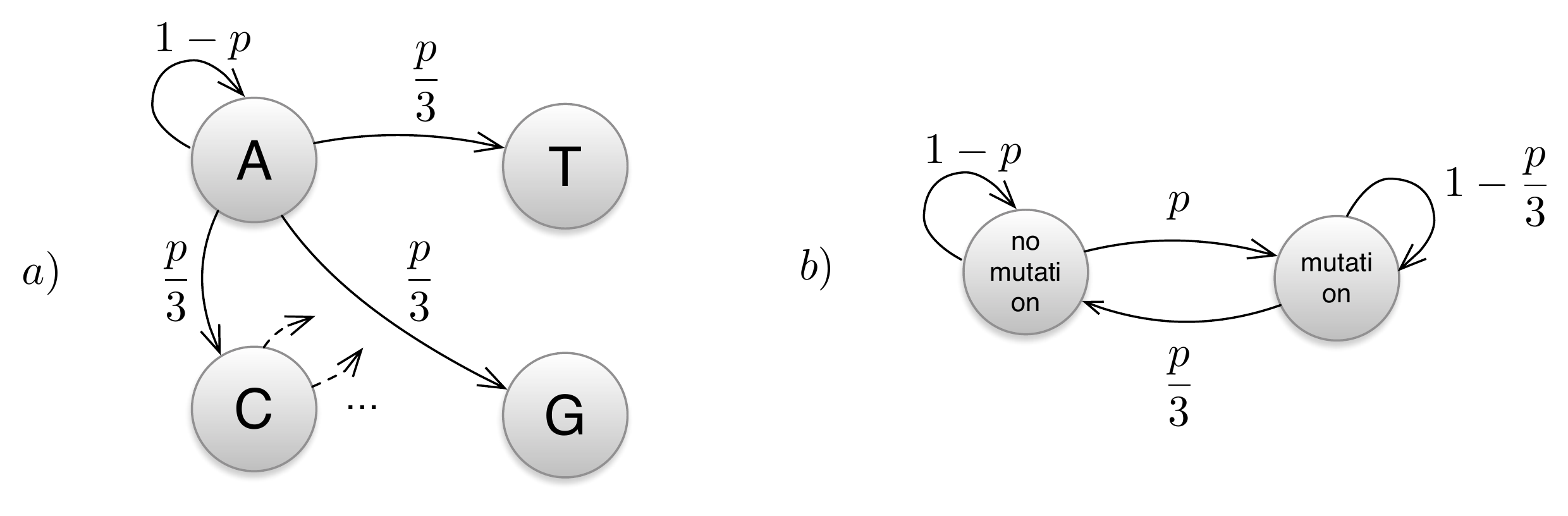}
\end{center}
\caption{A model of changes of a single nucleotide from one generation to another}
\label{fig:a}
\end{figure}

\subsubsection{Modeling mutations in a single gene.}  
Let us assume $\length$ to be the length of the DNA binding region of a gene
\footnote{More precisely the length of the promoter of gene} 
$\gene$, and $P^{id}$ to
be specify the probabilities of single point mutations occurring in
it. 
Moreover, we assume that the maximal weight for influence of gene $\gene$ to $\gene'$, {\em
i.e.} $\weight(\gene,\gene')$, is achieved for a single sequence of
nucleotides, and that the weight will linearly decrease with the
number of mutated nucleotides (where `mutated' refers to being
different to the sequence achieving the maximal weight).
Therefore, if $k$ out of $\length$ nucleotides in the promoter region of
$\gene'$ that is affected by $\gene$ are mutated, the weight $\weight(\gene,\gene')$ becomes
$\wmax(\gene,\gene')(1-\frac{k}{\length})$. 

If the whole promoter sequence of one gene is modelled by a string
$\avec\in\{\ade,\tim,\cyt,\gua\}^{\length}$, its changes over
generations are captured by an $l$-dimensional random process
$(X_n^1,\ldots,X_n^l)(\avec)\in\{\zero,\one\}^l$. 
Mutation events $X_n^1$, $X_n^2$,\ldots, $X_n^l$ are assumed to happen
independently within the genome, and independently of the history of
mutations. 
Then, a random process
$\K_n:=X_n^1+\ldots+X_n^l$,
 such that $\K_n=k\in\{0,\ldots,\length\}$, denotes
that the configuration at generation $n$ differs from the optimal
configuration in exactly $k$ points. 
The following lemma defines the values of the transition matrix of the process $\{\K_n\}$.

\begin{lemma} \rm
\label{lem:evolutionary}
The process $\{\K_n\}$ is a DTMC, with transition probabilities
$P(\K_{n+1}=j\mid \K_n=i) = \Q({i,j})$, where
$\Q: 0..l\times 0..l\to[0,1]$ amounts to:
\begin{align}
\label{eq:prob}
\Q(i,j) := \sum_{u=0}^{\min\{i,j\}} {i\choose u}p_{11}^u
p_{10}^{i-u}{{l-i}\choose{j-u}} p_{01}^{j-u}p_{00}^{l-i-(j-u)}.
\end{align}

Moreover, 
$\{\K_n\}$ converges to a unique stationary distribution, which is a binomial, with success probability
$\beta=\lim_{n\to\infty} P(X_n^i=\one)$.
In a special case
$\hat{P}^{id}=\left[ \begin{array}{ccc} 1-p & {p} \\
\frac{p}{3} & 1-\frac{p}{3}
\end{array} \right]$, 
the stationary distribution is $\beta=\frac{3}{4}$, independently of
the value of $p$.
\end{lemma}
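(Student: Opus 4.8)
The plan is to prove the lemma in three parts: (1) the Markov property of $\{\K_n\}$ together with the transition formula~\eqref{eq:prob}; (2) convergence to a unique stationary distribution that is binomial; (3) the special-case computation of $\beta = \frac{3}{4}$.

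\textbf{Step 1: Markov property and transition formula.} First I would observe that each coordinate process $\{X_n^i\}$ is a two-state DTMC governed by $P^{id}$, and that the coordinates are mutually independent (by the independence assumption on mutation events across the genome and across time). Hence the vector process $(X_n^1,\ldots,X_n^l)$ is itself a DTMC on $\{\zero,\one\}^l$, and $\K_n = X_n^1 + \cdots + X_n^l$ is a deterministic function of it. To see that $\K_n$ alone is Markov, I would note the key symmetry: the transition kernel of the vector process is invariant under permutations of the $l$ coordinates, and the fibre $\{\mathbf{x} : \sum x_i = i\}$ is a single orbit of the permutation group; by the standard lumping criterion for Markov chains, the sum is then Markov. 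For the formula, condition on $\K_n = i$: among the $l$ coordinates, $i$ are currently mutated and $l-i$ are not. To reach $\K_{n+1} = j$, suppose $u$ of the currently-mutated coordinates stay mutated (each independently with probability $p_{11}$, the rest flipping back with $p_{10}$) and $j-u$ of the currently-unmutated coordinates become mutated (each with probability $p_{01}$, the rest staying with $p_{00}$). Summing the resulting product of binomial terms over all feasible $u$, i.e. $0 \le u \le \min\{i,j\}$ and also $j - u \le l-i$, gives exactly~\eqref{eq:prob}. (I would remark the upper summation limit $\min\{i,j\}$ suffices provided the binomial coefficients vanish outside the valid range, which they do.)

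\textbf{Step 2: Convergence to a binomial stationary distribution.} Here I would argue that if each $\{X_n^i\}$ converges in distribution to its stationary law $\mathrm{Bernoulli}(\beta)$ with $\beta = \lim_n P(X_n^i = \one)$ — which holds as long as $P^{id}$ is such that the two-state chain is ergodic (irreducible and aperiodic), guaranteed whenever $0 < p < 1$ in the relevant parametrisations — then by independence of the coordinates the vector converges in distribution to $\mathrm{Bernoulli}(\beta)^{\otimes l}$, whose pushforward under the sum map is exactly the binomial $\mathrm{Bin}(l,\beta)$. Since $\{\K_n\}$ is a finite-state DTMC that (by the same ergodicity, inherited through lumping) is irreducible and aperiodic, it has a unique stationary distribution and converges to it; identifying that limit with $\mathrm{Bin}(l,\beta)$ finishes this part. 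Alternatively, and perhaps more cleanly, one can verify directly that $\mathrm{Bin}(l,\beta)$ is $\Q$-stationary by a generating-function computation, exploiting that the per-coordinate stationarity $\beta p_{00}{+}\ldots$ — more precisely $\beta = (1-\beta)p_{01} + \beta p_{11}$ — makes the binomial reproduce itself; uniqueness then comes from irreducibility.

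\textbf{Step 3: The special case.} For $\hat P^{id}$ I would just solve the $2\times 2$ stationarity equation: $\beta = (1-\beta)\cdot p + \beta\cdot(1-\tfrac{p}{3})$, which simplifies to $\beta p = \tfrac{p}{3}\beta + p(1-\beta)$, hence $\tfrac{2p}{3}\beta = p(1-\beta)$, giving $\beta = \tfrac{3}{4}$ independently of $p \in (0,1]$ (one checks the degenerate $p=0$ case separately or excludes it). Plugging into Step 2 yields stationary law $\mathrm{Bin}(l,\tfrac{3}{4})$.

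\textbf{Main obstacle.} The substantive point is the lumpability argument in Step 1 — establishing rigorously that $\K_n$ is Markov (not merely that its one-step marginals are consistent), which requires invoking the permutation symmetry of the vector kernel and the exact-lumping criterion; everything else is routine binomial bookkeeping and a $2\times 2$ eigenvector computation. A secondary technical nuisance is being careful about edge cases in the summation range of~\eqref{eq:prob} and about the ergodicity hypotheses on $P^{id}$ needed for uniqueness of the stationary distribution.
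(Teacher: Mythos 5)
Your proposal is correct and follows essentially the same route as the paper: the same combinatorial reading of $u$ (mutated sites staying mutated) and $j-u$ (unmutated sites becoming mutated) for \eqref{eq:prob}, ergodicity of the finite chain for existence and uniqueness of the stationary law, independence of the per-nucleotide Bernoulli($\beta$) stationary marginals to identify it as a binomial with parameters $(l,\beta)$, and the $2\times 2$ balance equation for the special case---your explicit lumpability argument merely spells out the Markov property that the paper asserts without comment. One minor correction: in Step 3 the intermediate lines ``$\beta p = \frac{p}{3}\beta + p(1-\beta)$, hence $\frac{2p}{3}\beta = p(1-\beta)$'' are mis-transcribed and would yield $\beta=\frac{3}{5}$; the correct simplification of your (correct) balance equation $\beta=(1-\beta)p+\beta(1-\frac{p}{3})$ is $\frac{p}{3}\beta = p(1-\beta)$, which indeed gives $\beta=\frac{3}{4}$.
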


\begin{proof}\rm(\lemref{lem:evolutionary})
It suffices to observe that $u$ represents the number of mutated nucleotides which remain mutated at time $(n+1)$, and $(j-u)$ is the number of unmated nucleotides which become mutated at time $(n+1)$.
The existence and convergence of the stationary trivially follows from that $\{\K_n\}$ is regular (irreducible and aperiodic).
Each process $\{X_n^i\}$ at a stationary behaves as a Bernoulli process, with probability of being mutated equal to $\beta=\lim_{n\to\infty} P(X_n^i=1)$, and a probability of remaining un-mutated $1-\beta$.
The claim follows as the sum of $l$ independent Bernoulli processes is binomially distributed.
The special case follows because matrix $\hat{P}^{id}$ has a unique stationary distribution $(0.25,0.75)$ (denoting by $a$ and $b$ its coordinates, we obtain equations $a(1-p)+b\frac{p}{3}=a$ and $ap+b(1-\frac{p}{3})=b$, which can be satisfied only if $a=\frac{b}{3}$).
\end{proof}

Remark that in a special case when $p_{00}=0.25$, 
the transition matrix of the process $\{\K_t\}$ has the transition probabilities  
 \begin{align*}
P(K_{n+1}=j\mid K_n=i) & = p_{11}^{j}p_{00}^{l-j} \sum_{u=0}^i {i\choose u}{{l-i}\choose{j-u}} \\
 & = {l\choose j}0.75^{j}0.25^{l-j},
 \end{align*}
which do not depend on $i$.

The special case of matrix $\Q$ for $l=2$ is illustrated in \figref{fig:evolutionary}.

\begin{figure}
\begin{center}
\includegraphics[scale=0.5]{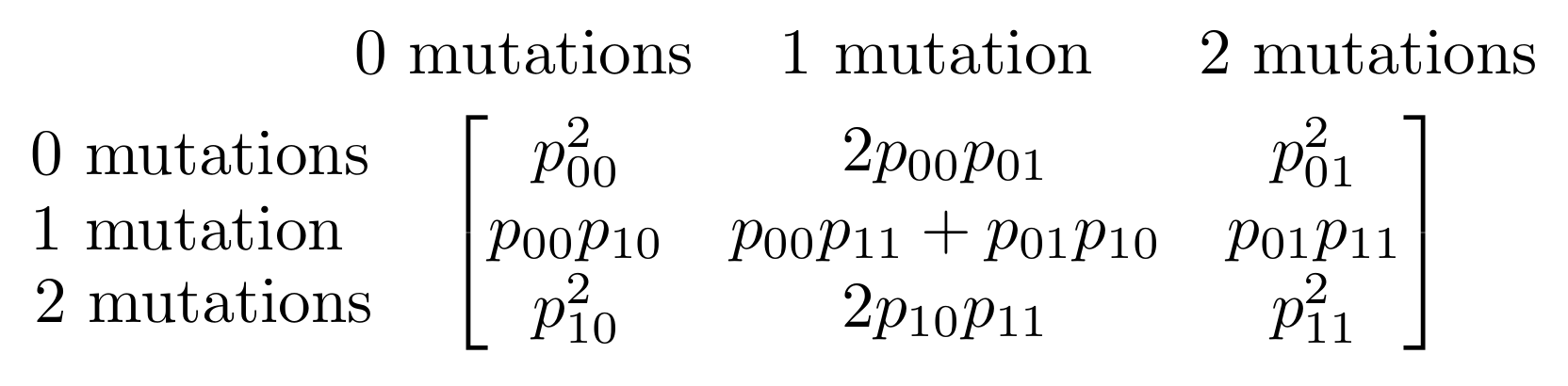}
\end{center}
\caption{The transition matrix for the evolutionary process of one gene with two nucleotides. 
The transition probabilities are computed based on the assumption that the single nucleotide mutations happen independently across the genome: for example, the transition from zero to two mutation is equal to $p_{01}^2$, 
because it occurs iff both nucleotides transition from a correct to a mutated state.
}
\label{fig:evolutionary}
\end{figure}

\subsubsection{Modeling mutations in a GRN.}

We assume that mutations happen independently across the genome, and therefore,
if the genome is in the configuration $\kvec=(k_1,\ldots,k_d)$, the
probability that the next generation will be in the configuration
$\kvec'=(k_1',\ldots,k_d')$ will be a product of transition probabilities from 
$k_i$ to $k_i'$, for $i=1,\ldots,d$.

%
Each sequence $\kvec=(k_1,\ldots,k_d)\in
0..\length_1\times\ldots\times 0..\length_{\geneN}$ defines one weight
function $\weight_{\kvec}$, with
$\weight_{\kvec}(\gene_i,\gene_j)=\wmax(\gene_i,\gene_j)(1-\frac{k_j}{l_j})$. 
%
%
Hence, the domain of weight functions is given by
\[
\weightSet = \{\weight_{\kVec}\mid
\kVec\in0..\length_1\times\ldots\times 0..\length_{\geneN} 
\},
\]
and the distribution is given by
\[
\pi(\weight_{\kVec}) = \prod_{i=1}^d {d\choose k_i}\beta^{k_i}(1-\beta)^{d-k_i}. 
\]

%


\end{document}